\title{Monotonic Prefix Consistency in Distributed Systems}
\author{
    Alain Girault\inst{1} \and
    Gregor Gössler\inst{1} \and
    Rachid Guerraoui\inst{2} \and \\
    Jad Hamza\inst{3} \and
    Dragos-Adrian Seredinschi\inst{2}
} 
\institute{Univ.\ Grenoble Alpes, Inria, CNRS, Grenoble
  INP, LIG,
  38000 Grenoble, France \and LPD, EPFL \and LARA, EPFL}
\begin{document}

\pagestyle{headings}  


\newcommand{\set}[1]{\{{#1}\}}
\newcommand{\mset}[1]{\{\{{#1}\}\}}
\newcommand{\lem}[1]{Lemma~\ref{#1}}
\newcommand{\thm}[1]{Theorem~\ref{#1}}
\newcommand{\coro}[1]{Corollary~\ref{#1}}
\newcommand{\fig}[1]{Figure~\ref{#1}}
\newcommand{\sect}[1]{Section~\ref{#1}}
\newcommand{\apd}[1]{Appendix~\ref{#1}}
\newcommand{\tab}[1]{Table~\ref{#1}}
\newcommand{\ie}{i.e.~}
\newcommand{\resp}{resp.,~}
\newcommand{\foorp}{\hfill\ensuremath{\square}}

\newcounter{counttodos}
\newcommand{\ttodo}[1]{
  \stepcounter{counttodos}
  \todo[inline,color=blue!40]{
    TODO {\thecounttodos}: #1
  }
}

\newcommand{\Pid}{\mathbb{P}}
\newcommand{\Nat}{\mathbb{N}}
\newcommand{\PosNat}{\mathbb{N}^{>0}}
\newcommand{\updates}{{\sf Upd}}
\newcommand{\answers}{{\sf Ans}}
\newcommand{\Msg}{{\sf Msg}}
\newcommand{\Cid}{\mathbb{C}}
\newcommand{\option}[1]{{#1}^{\bot}}

\newcommand{\emptyseq}{\varepsilon}
\newcommand{\stronger}{\preceq}
\newcommand{\equivalent}{\approx}
\newcommand{\strictstronger}{\prec}

\newcommand{\history}{history}
\newcommand{\execution}{execution}
\newcommand{\trace}{trace}
\newcommand{\site}{site}
\newcommand{\sites}{sites}
\newcommand{\client}{client}
\newcommand{\apc}{{\sf APC}}
\newcommand{\hpc}{{\sf HPC}}
\newcommand{\hr}{history-recording}

\newcommand{\semantics}[1]{\llbracket {#1} \rrbracket}
\newcommand{\project}[2]{#1|_{#2}}
\newcommand{\fair}[1]{\textsf{fair}({#1})}

\newcommand{\hist}{h}
\newcommand{\ops}{O}
\newcommand{\opsof}[1]{O_{#1}}
\newcommand{\exec}{e}
\newcommand{\tr}{t}
\newcommand{\writeSet}{W}
\newcommand{\reads}{\tr_r}
\newcommand{\readsof}[1]{{#1}_r}
\newcommand{\pid}{\textit{pid}}
\newcommand{\pos}{\rho}
\newcommand{\rel}{\mathcal{R}}
\newcommand{\msg}{\textit{msg}}
\newcommand{\msgid}{\textit{mid}}
\newcommand{\meth}{r}
\newcommand{\rv}{a}
\newcommand{\view}{v}
\newcommand{\nsites}{n}
\newcommand{\aid}{\textit{aid}}
\newcommand{\cid}{c}
\newcommand{\act}{\sigma}
\newcommand{\initState}{q_{\textit{init}}}
\newcommand{\initStates}{\iota}

\newcommand{\getlabel}{\ell}
\newcommand{\gettrace}[1]{{\sf tr}({#1})}
\newcommand{\dc}[1]{{\downarrow {#1}}}
\newcommand{\send}[1]{{\tt broadcast}({#1})}
\newcommand{\receive}[1]{{\tt receive}({#1})}
\newcommand{\getround}{{\sf round}}
\newcommand{\growth}{{\sf growth}}
\newcommand{\consistent}{{\sf consistent}}
\newcommand{\replay}{{\sf replay}}
\newcommand{\takeprefix}[2]{{#1}^{\leq {#2}}}
\newcommand{\lastround}{{\sf lastround}}
\newcommand{\sra}{\shortrightarrow}

\newcommand{\chainvar}{{\tt chain}}
\newcommand{\bcvar}{{\tt bc}}
\newcommand{\noncevar}{{\tt nonce}}
\newcommand{\flabel}{\ell}

\newcommand{\axthinair}{{\sc ThinAir}}
\newcommand{\axrv}{{\sc RValue}}
\newcommand{\axprefix}{{\sc Prefix}}
\newcommand{\axbounded}{{\sc BDelay}}
\newcommand{\axmr}{{\sc MonotonicReads}}

\newcommand{\ltr}{<_\rel}
\newcommand{\leqrel}[1]{\leq_{#1}}
\newcommand{\ltrel}[1]{<_{#1}}
\newcommand{\po}{{\sf po}}
\newcommand{\hb}{{\sf hb}}
\newcommand{\ltpo}{<_{\po}}
\newcommand{\ltvis}{<_{\vis}}
\newcommand{\lthb}{<_{\hb}}
\newcommand{\leqp}{\ensuremath{\sqsubseteq}\xspace}
\newcommand{\ltp}{\ensuremath{\sqsubset}\xspace}

\newcommand{\vispast}[1]{{\sf past}({#1})}
\newcommand{\lasting}[1]{{\sf last}({#1})}
\newcommand{\thepast}[1]{{\sf \pi}_{#1}}
\newcommand{\pref}{\preceq_{pref}}
\newcommand{\weaker}{\preceq_{weak}}

\newcommand{\getbefore}[2]{{#1}^{#2}}

\newcommand{\pow}{PoW}
\newcommand{\impl}{\mathcal{I}}

\newcommand{\bcof}{{\sf BC_\view}}
\newcommand{\cli}{\mathcal{C}}
\newcommand{\spec}{S}
\newcommand{\speclist}{S_{\textit{list}}}
\newcommand{\op}{o}
\newcommand{\prefix}{P}
\newcommand{\loc}{loc}
\newcommand{\vis}{vis}
\newcommand{\loco}{\loc_\op}
\newcommand{\go}{go}
\newcommand{\dropped}{D}
\newcommand{\broadcast}{{\tt broadcast}}
\newcommand{\consistentprefix}{consistent monotonic prefixes}
\newcommand{\cmp}{{\sf CMP}}

\newcommand{\ret}{{\,\triangleright\,}}
\newcommand{\mkaction}[2]{{#1} \ret {#2}}

\newcommand{\projectrv}[2]{{#1}\{{#2}\}}
\newcommand{\getprefix}[2]{{#1}^{\leq {#2}}}

\newcommand{\crit}{\mathcal{C}}

\newcommand{\append}{{\tt write}}
\newcommand{\readlist}{{\tt read}}
\newcommand{\dat}{d}
\newcommand{\ok}{{\tt OK}}

\newcommand{\mkread}[1]{\mkaction{\readlist}{#1}}
\newcommand{\mkwrite}[1]{\append({#1})}

\newcommand{\handler}{{\sf msg\_handler}\xspace}
\newcommand{\updhandler}{{\sf update\_handler}\xspace}
\newcommand{\qryhandler}{{\sf query\_handler}\xspace}

\newcommand{\recreq}{R}
\newcommand{\self}{{\sf Event}}
\newcommand{\other}{{\sf History}}
\newcommand{\others}{{\sf Histories}}

\newcommand{\hasempty}{\textsf{HasEmpty}}
\newcommand{\availability}{\textsf{Availability}}
\newcommand{\commute}{\textsf{Commute}}
\newcommand{\invisibleread}{\textsf{InvisibleRead}}
\newcommand{\writesymmetry}{\textsf{WriterSymmetry}}

\tikzstyle{group}=[
  DarkRed,
  draw=DarkRed,
  fill=Red!10!White,
  thick,
  rounded corners=10 pt,
  inner sep=2pt,
  align=left
]

\tikzstyle{criterion}=[
  DarkBlue,
  draw=DarkBlue,
  fill=Honeydew,
  thick,
  rounded corners=10 pt,
  inner sep=7pt,
  align=left,
  transform shape=false
]

\tikzstyle{incomp}=[Red,decorate,decoration=snake,very thick]
\tikzstyle{critrel}=[<-,DarkGreen,very thick]

\newcommand{\MPC}{\texttt{MPC}\xspace}
\newcommand{\content}{\textsf{content}}
\newcommand{\appends}{\textsf{writes}}
\newcommand{\isAppend}{\textsf{isWrite}}
\newcommand{\isRead}{\textsf{isRead}}
\newcommand{\getPid}{\textsf{getPid}}

\newcommand{\completions}[1]{\textsf{Completions}({#1})}
\newcommand{\sequences}[1]{\textsf{Sequences}({#1})}

\newcommand{\var}{x}
\newcommand{\addOperation}[2]{\textsf{addOperation}({#1},{#2})}
\newcommand{\removeOperation}[2]{\textsf{removeOperation}({#1},{#2})}

\lstdefinelanguage{scala}{
  morekeywords={abstract,case,catch,class,def,%
    do,else,extends,false,final,finally,%
    for,if,implicit,import,match,mixin,%
    new,null,object,override,package,%
    private,protected,requires,return,sealed,%
    send,super,this,throw,trait,true,try,%
    type,val,var,while,with,yield},
  otherkeywords={=>,<-,<\%,<:,>:,\#,@},
  sensitive=true,
  morecomment=[l]{//},
  morecomment=[n]{/*}{*/},
  morestring=[b]",
  morestring=[b]',
  morestring=[b]"""
}

\lstset{
  keywordstyle=\color{blue}\ttfamily,
  commentstyle=\color{DarkGreen}\ttfamily,
  rulecolor=\color{black},
  mathescape=true,
  language=scala,
  numbersep=5pt,
  numberstyle=\scriptsize\color{DarkRed},
  numberblanklines=true,
  numbers=left,
  tabsize=2,
  frame={tb},
}

\newcommand{\labell}{\textsf{label}}

\newcommand{\needsrev}[1]{{\color{red}{#1}}}

\maketitle
 

\begin{abstract}
  We study the issue of data consistency in distributed
  systems. Specifically, we consider a distributed system that
  replicates its data at multiple sites, which is prone to partitions,
  and which is assumed to be available (in the sense that queries are
  always eventually answered). In such a setting, strong consistency,
  where all replicas of the system apply synchronously every
  operation, is not possible to implement. However, many weaker
  consistency criteria that allow a greater number of behaviors than
  strong consistency, are implementable in available distributed
  systems.

  \hspace*{3mm} We focus on determining the strongest consistency
  criterion that can be implemented in a convergent and available
  distributed system that tolerates partitions. We focus on objects
  where the set of operations can be split into updates and queries. We
  show that no criterion stronger than Monotonic Prefix Consistency
  (MPC) can be implemented. 

\end{abstract}




\section{Introduction}

\emph{Replication} is a mechanism that enables sites from different
geographical locations to access a shared data type with low
latency. It consists of creating copies of this data type on each
\emph{site} of a distributed system. Ideally, replication should be
transparent, in the sense that the users of the data type should
not notice discrepancies between the different copies of the data
type. 

An ideal replication scheme could be implemented by keeping all sites
synchronized after each update to the data type. This ideal model
is called \emph{strong consistency}, or
linearizability~\cite{journals/toplas/HerlihyW90}. The disadvantage of
this model is that it can cause large delays for users, and worse the
data type might not be \emph{available} to use at all times. This
may happen, for instance, if some sites of the system are unreachable,
i.e., partitioned from the rest of the network.  Briefly, it is not
possible to implement strong consistency in a distributed system while
ensuring \emph{high
availability}~\cite{bre12cap,DBLP:journals/sigact/GilbertL02,attiya2017limitations}.
High availability (hereafter \emph{availability} for short) means that
sites must answer users' requests directly, without waiting for outside
communication.

Given this impossibility, developers rely on weaker notions of
consistency, such as causal
consistency~\cite{Lamport:1978:TCO:359545.359563}. Weaker consistency
criteria do not require sites to be exactly synchronized as in strong
consistency. For instance, causal consistency allows different sites to
apply updates to the data type in different orders, as long as the
updates are not \emph{causally related}. Informally, a consistency
criterion specifies the \emph{behaviors} that are allowed by a replicated
data type. In this sense, causal consistency is more permissive than
strong consistency. We also say that strong consistency is
\emph{stronger} than causal consistency, as strong consistency allows
strictly fewer behaviors than causal consistency. A natural question is
then: What is the strongest consistency criterion that can be
implemented by a replicated data type? We focus in this paper on data
types where the set of operations can be split into two disjoint sets,
updates and queries. Updates modify the state and but do not return
values, while queries return values without modifying the state.

In~\cite{attiya2017limitations}, it was proven that nothing stronger
than \emph{observable causal consistency} (a variant of causal
consistency) can be implemented. It is an open question whether
observable causal consistency itself is actually implementable.
Moreover, \cite{attiya2017limitations} does not study consistency
criteria that are not comparable to observable causal consistency.
Indeed, there exist consistency criteria that are neither stronger
than causal consistency, nor weaker, and which can be implemented by a
replicated data type.

In our paper, we explore one such consistency criterion. More
precisely, we prove that, under some conditions which are natural in a
large distributed system (availability and convergence), nothing
stronger than \emph{monotonic prefix consistency}
(\MPC)~\cite{replicated-data-consistency-explained-through-baseball}
can be implemented. This result does not contradict the result
from~\cite{attiya2017limitations}, since \MPC and causal consistency
are incomparable.

The reason why \MPC and observable causal consistency are incomparable
is as follows. \MPC requires all sites to apply updates in the same
order (but not necessarily synchronized at the same time, as in strong
consistency), while causal consistency allows non-causally related
updates to be applied in different orders. On the other hand, causal
consistency requires all causally-related updates to be applied in an
order respecting causality, while \MPC requires no such constraint.


Overall, our contribution is to prove that, for a notion of behaviors
where the time and place of origin of updates do not matter,
nothing stronger than \MPC can be implemented in a
distributed setting. 
Moreover, we remark that clients that only have the observability
defined in Section~\ref{sec:problem} cannot tell the difference between
a strongly consistent implementation and an \MPC implementation.

In the rest of this paper, we first give preliminary notions and a
formal definition of the problem we are addressing (Sections
\ref{sec:implementations} and \ref{sec:problem}). We then turn our
attention to the \MPC model by defining it formally and through an
implementation (Section \ref{sec:mpc}). We prove that, given the 
observability mentioned above, and under conditions natural in a
large-scale network (availability, convergence), nothing stronger than
\MPC can be implemented (Section \ref{sec:mpcismin}). Then we compare
\MPC with other consistency models (Section \ref{sec:related}), and
conclude (Section \ref{sec:conclusion}).

To improve the presentation, some proofs are deferred to the appendix.










\section{Replicated Implementations}
\label{sec:implementations}

An \emph{implementation} of a replicated data type consists of
several \emph{sites} that communicate by sending messages. Messages are
delivered asynchronously by the network, and can be reordered or
delayed. To be able to build implementations that provide liveness
guarantees, we assume all messages are \emph{eventually} delivered by
the network.

Each site of an implementation maintains a local state. This local state
reflects the view that the site has on the replicated data type,
and may contain arbitrary data. Each site implements the protocol by
means of an \emph{update handler}, a \emph{query handler}, and a
\emph{message handler}.

The update handler is used by (hypothetical) clients to submit updates
to the data type. The update handler may modify the local states of
the site, and broadcast a message to the other sites. Later, when
another site receives the message, its \emph{message handler} is
triggered, possibly updating the local state of the site, and possibly
broadcasting a new message.

The \emph{query handler} is used by clients to make queries on the data
type. The query handler returns an answer to the client, and is a
read-only operation that does not modify the local state or broadcast
messages.

\begin{remark}
  Our model only supports broadcast and not general peer-to-peer
  communication, but this is without loss of generality. We can simulate
  sending a message to a particular site by writing the identifier of
  the receiving site in the broadcast message. All other sites would
  then simply ignore messages that are not addressed to them.
\end{remark}

In this paper, we consider implementations of the \emph{list data
  type}. The list supports an update operation of the form
$\append(\dat)$, with $\dat \in \Nat$, which adds the element $\dat$ to the end of the
list. The list also supports a query operation $\readlist$ that
returns the whole list $\ell \in \Nat^*$, which is a sequence of
elements in $\Nat$.

\begin{definition}
  \label{def:upd-ans}
  Let $\updates = \set{\append(\dat)\ |\ \dat \in \Nat}$ be the set of
  updates, and $\answers = \set{\readlist(\ell)\ |\ \ell \in \Nat^*}$ be
  the set of all possibles answers to queries.
\end{definition}


We focus on the list data type because queries return the history
of all updates that ever happened. In that regard, lists can encode any
other data type whose operations can be split in updates and
queries, by adding a processing layer after the query operation of the
list returns all updates. Data types that contain operations which
are queries and updates at the same time (e.g.~the Pop operation of a 
stack) are outside the scope of this paper. We now proceed to give the
formal syntax for implementations, and then the corresponding
operational semantics.


\begin{definition}
  \label{def:impl}
  An \emph{implementation} $\impl$ is a tuple \\
  $(Q,\initStates,\Pid,\Msg,\handler,\updhandler,\qryhandler)$ where
  \begin{itemize}
    \item $Q$ is a non-empty set of \emph{local states},

    \item $\Pid$ is a non-empty finite set of \emph{process identifiers},

    \item $\initStates: \Pid \rightarrow Q$ associates to each process
      an \emph{initial local state},

    \item $\Msg$ is a set of \emph{messages},

    \item $\handler: Q \times \Msg \rightarrow Q \times \option{\Msg}$
      is a total function, called the \emph{handler of incoming messages},
      which updates the local state of a site when a message is
      received, and possibly broadcasts a new message,

    \item $\updhandler: Q \times \updates \rightarrow Q \times
      \option{\Msg}$ is a total function, called the \emph{handler of
        updates}, which modifies the local state when an update is
      submitted, and possibly broadcasts a message.

    \item $\qryhandler: Q \rightarrow \answers$ is a total function, called
      the \emph{handler of client queries}, which returns an answer to
      client queries.
  \end{itemize}

  The set $\option{\Msg}$ is defined as $\Msg \uplus \set{\bot}$, where
  $\bot$ is a special symbol denoting the fact that no message is sent.
\end{definition}


Before defining the semantics of implementations, we introduce a few
notations. We first define the notion of an \emph{action}, used to denote
events that happen during the execution. Each action contains a unique
\emph{action identifier} $\aid \in \Nat$, and the process identifier
$\pid \in \Pid$ where the action occurs.

\begin{definition}
  \label{def:send-rec}
  A \emph{broadcast action} is a tuple
  $(\aid,\pid,\send{\msgid,\msg})$, and a \emph{receive action} is a
  tuple $(\aid,\pid,\receive{\msgid,\msg})$, where $\msgid \in \Nat$
  is the \emph{message identifier} and $\msg \in \Msg$ is the
  message. An \emph{update action} or a \emph{write action} is a tuple
  $(\aid,\pid,\append(d))$ where $d \in \Nat$. Finally, a \emph{query
    action} or a \emph{read action} is a tuple
  $(\aid,\pid,\readlist(\ell))$ where $\ell \in \Nat^*$.
\end{definition}

Executions are then defined as sequences of actions, and are considered 
up to action and message identifiers renaming.

\begin{definition}
  \label{def:exec}
  An \emph{execution} $\exec$ is a sequence of broadcast, receive,
  query and update actions where no two actions have the same
  identifier \aid, and no two broadcast actions have the same message
  identifier \msgid.
\end{definition}

We now describe how implementations operate on a given site $\pid
\in \Pid$.

\begin{definition}
  \label{def:follows}
  We say that a sequence of actions $\act_1 \dots \act_n \dots$ from
  site $\pid$ \emph{follows} $\impl$ if there exists a sequence of
  states $q_0 \dots q_n \dots$ such that $q_0 = \initStates(\pid)$, and
  for all $i \in \Nat \backslash \{0\}$, we have:
  \begin{enumerate}
    \item\label{update} if $\act_i = (\aid,\pid,\append(d))$ with $d \in
      \Nat$, then $\updhandler(q_{i-1}, \append(d)) = (q_i,\_)$.
      This means that upon a write action, a site must update its 
      state as defined by the update handler;


    \item\label{query} if $\act_i = (\aid,\pid,\readlist(\ell))$ with
      $\ell \in \Nat^*$, then $\qryhandler(q_{i-1}) = \readlist(\ell)$
      and $q_i = q_{i-1}$.
      This condition states that query actions do not modify the state, 
      and that the answer $\readlist(\ell)$ given to query actions 
      must be as specified by the query handler, depending on the 
      current state $q_{i-1}$;

    \item\label{broadcast} if $\act_i =
      (\aid,\pid,\send{\msgid,\msg})$, then $q_i = q_{i-1}$.
      Broadcast actions do not modify the local state;

    \item\label{receive} if $\act_i =
      (\aid,\pid,\receive{\msgid,\msg})$, then $\handler(q_{i-1},\msg)
      = (q_i,\_)$.
      The reception of a message modifies the local state as specified 
      by $\handler$.


    \newcounter{enumTemp}
    \setcounter{enumTemp}{\theenumi}
  \end{enumerate}

  Moreover, we require that broadcast actions are performed if and only
  if they are triggered by the handler of incoming messages, or the
  handler of clients requests. Formally, for all $i > 0$, $\act_i =
  (\aid,\pid,\send{\msgid,\msg})$ if and only if either:
  \begin{enumerate}
    \setcounter{enumi}{\theenumTemp}

    \item $\exists\, \append(\dat) \in \updates$ and $\aid\,' \in \Nat$
      such that $\act_{i-1} = (\aid\,',\pid,\append(\dat))$ and
      \\ $\updhandler(q_{i-1},\append(\dat)) = (q_i,\msg)$, or

    \item $\exists\, \aid\,' \in \Nat$, $\msgid \in \Nat$, and $\msg\,'
      \in \Msg$ such that \\ $\act_{i-1} =
      (\aid\,',\pid,\receive{\msgid,\msg})$ and
      $\handler(q_{i-1},\msg\,') = (q_i,\msg)$.
  \end{enumerate}

  When all conditions hold, we say that $q_0 \dots q_n \dots$ is a
  \emph{run} for $\act_1 \dots \act_n \dots$. Note that when a run 
  exists for a sequence of actions, it is unique.
\end{definition}




We then define the set of executions generated by~$\impl$,
denoted~$\semantics{\impl}$. In particular, this definition models the
communication between sites, and specifies that a receive action may
happen only if there exists a broadcast action with the same message
identifier preceding the receive action in the execution. Moreover, a
\emph{fairness} condition ensures that, in an infinite execution, every
broadcast action must have a corresponding receive action on every site.

\begin{definition}
  \label{def:set-of-exec}
  Let $\impl$ be an implementation. The set of executions generated by
  $\impl$ is $\semantics{\impl}$ such that $e \in \semantics{\impl}$
  if and only if the three following conditions hold:
  \begin{itemize}
    \item \textbf{Projection:} for all $\pid \in \Pid$, the projection
      $\project{\exec}{\pid}$ follows~$\impl$,

    \item \textbf{Causality:} for every receive action $\act =
      (\aid,\pid,\receive{\msgid,\msg})$, there exists a broadcast
      action $(\aid\,',\pid\,',\send{\msgid,\msg})$ before $\act$
      in~$\exec$,

    \item \textbf{Fairness:} if $\exec$ is infinite, then for every
      site $\pid \in Pid$ and every broadcast action
      $(\aid\,',\pid\,',\send{\msgid,\msg})$ performed on any
      site~$\pid\,'$, there exists a receive action
      $(\aid,\pid,\receive{\msgid,\msg})$ in~$\exec$,
  \end{itemize}
  where $\project{\exec}{\pid}$ is the subsequence of $\exec$ of
  actions performed by process~$\pid$:
  \begin{itemize}
    \item $\project{\varepsilon}{\pid} = \varepsilon$;

    \item $\project{((\aid,\pid,x).\exec)}{\pid} =
    (\aid,\pid,x).(\project{\exec}{\pid})$;

    \item $\project{((\aid,\pid\,',x).\exec)}{\pid} =
    \project{\exec}{\pid}$ whenever $\pid\,' \neq \pid$.
  \end{itemize}
\end{definition}

\begin{remark}
  The implementations we consider are \emph{available} by construction,
  in the sense that any site allows any updates or queries to be done at
  any time, and answers to queries directly. This is ensured by the fact
  that our update and query handlers are total functions. More
  precisely, the item \ref{update} of Definition~\ref{def:follows}
  (together with Definition~\ref{def:set-of-exec}) ensures that updates
  can be performed at any time through the update handler (\emph{update
  availability}). 
  
  The broadcast action that happens right after an update action must be
  thought of as happening right after the update. Broadcast actions do not
  involve actively waiting for responses, and as such do not prevent
  availability.
  
  Similarly, the item \ref{query} of
  Definition~\ref{def:follows} ensures that any query of any site is
  answered immediately, only using the local state of the site
  (\emph{query availability}). We later formalize this in
  Lemmas~\ref{lemma:updavailability} and \ref{lemma:availability}.
\end{remark}

For the rest of the paper, we consider that updates are unique, in the
sense that an execution may not contain two update actions that write
the same value $d \in \Nat$. This assumption only serves to simplify the
presentation of our result, and can be done without loss of generality,
as updates can be made unique by attaching a unique timestamp to them.


\section{Problem Definition}
\label{sec:problem}

In this section, we explain how we compare implementations using the
notion of a \emph{trace}. Informally, the trace of an execution
corresponds to what is observable from the point of view of clients
using the data type.

Our notion of a trace is based on two assumptions:
(1)~Clients know the order of the queries they have done on a
    site, but not the relative positions of their queries with respect
    to other clients' queries.
(2)~The origin of updates is not relevant from a client's perspective. This
    models publicly accessible data structures where any client can disseminate a
    transaction in the network, and the place and time where the
    transaction was created are not relevant for the protocol execution.

More precisely, a trace records an unordered set of
updates (without their site identifiers), and records for each site the
sequence of queries that happened on this site.

\begin{definition}
\label{sec:deftrace}
  A \emph{trace} $(\reads,\writeSet)$ is a pair where $\reads$ is a
  labelled partially ordered set (see hereafter for more details), 
  and $\writeSet$ is a subset of $\Nat$.
  The trace $(\reads,\writeSet)$ corresponding to an execution
  $\exec$ is denoted $\gettrace{\exec}$, where
  $\reads = (A,<,\labell)$ is a labelled partially ordered set such
  that:
  \begin{itemize}
    \item $A$ is the set of action identifiers of query actions of
      $\exec$;

    \item $<$ is a transitive and irreflexive relation over $A$,
      sometimes called the \emph{program order}, ordering queries
      performed on the same site; more precisely, we have $\aid <
      \aid\,'$ if $\aid,\aid\,' \in A$ are action identifiers
      performed by the same site, and that appear in that order in~$e$;

    \item $\labell: A \rightarrow \answers$ is the \emph{labelling
      function} such that for any $\aid \in A$, $\labell(\aid)$ is the
      answer of the query action corresponding to $\aid$ in~$\exec$;
  \end{itemize}
  and $\writeSet \subseteq \Nat$ is the set of elements that appear in
  an update action of~$\exec$.
\end{definition}

\begin{example}
  \label{example} Consider the execution $e$ in~\fig{fig:trace}, and its
  corresponding trace $\gettrace{e}$. ($\pid_1,\pid_2,\pid_3 \in \Pid$
  are site identifiers, $\msgid_1,\allowbreak \msgid_2,\msgid_3 \in
  \Nat$ are unique message identifiers, and $\msg_1,\msg_2,\msg_3 \in
  \Msg$ are messages.) 
\end{example}

\tikzstyle{ordering}=[->,thick]
\tikzstyle{pointy}=[circle,fill=Black,inner sep=0, minimum width=4pt]

\newcommand{\place}[5]{
    \node[pointy] at (#1,#2) ({#4}) {};
    \node[yshift=3mm] at (#4) { #3 };
    \node[yshift=-3mm] at (#4) { #5 };
}

\begin{figure}[htb]
  \begin{minipage}[t]{0.59\textwidth}
  \begin{flalign*}
  &  (188, \pid_1, \append(3)) \cdot \\
  &  (3713, \pid_1, \send{\msgid_3,\msg_3}) \cdot \\
  &  (152, \pid_1, \append(1)) \cdot \\
  &  (16, \pid_1, \send{\msgid_1,\msg_1}) \cdot \\
  &  (137, \pid_1, \readlist []) \cdot \\
  &  (2448, \pid_3, \readlist []) \cdot \\
  &  (37, \pid_2, \append(2)) \cdot \\
  &  (164, \pid_2, \send{\msgid_2,\msg_2}) \cdot \\
  &  (189, \pid_2, \readlist [2]) \cdot \\
  &  (733, \pid_3, \receive{\msgid_2,\msg_2}) \cdot \\
  &  (133, \pid_3, \receive{\msgid_1,\msg_1}) \cdot \\
  &  (111, \pid_2, \receive{\msgid_1,\msg_1}) \cdot \\
  &  (17, \pid_3, \readlist [2,1]) \cdot \\
  &  (12, \pid_1, \readlist [2]) \cdot \\
  &  (15, \pid_2, \readlist [2,1]) \cdot
  \end{flalign*}
  \end{minipage} %
  \begin{minipage}[t]{0.4\textwidth}
  \raisebox{-1.5cm}{
  \raisebox{-\height}{
    \begin{tikzpicture}[xscale=4,yscale=2]
      \node[] at (1,0) () {$pid_1:$};
      \place{1.25}{0}{\readlist []}{C}{137};
      \place{1.75}{0}{\readlist [2]}{D}{12};
      \node[] at (1,-1) () {$pid_2:$};
      \place{1.25}{-1}{\readlist [2]}{E}{189};
      \place{1.75}{-1}{\readlist [2,1]}{F}{15};
      \node[] at (1,-2) () {$pid_3:$};
      \place{1.25}{-2}{\readlist []}{G}{2448};
      \place{1.75}{-2}{\readlist [2,1]}{H}{17};
      \draw[ordering] (C) -- (D);
      \draw[ordering] (E) -- (F);
      \draw[ordering] (G) -- (H);
    \end{tikzpicture}
  }
  }
  \end{minipage}
  
  \caption{An execution $e$ read from top to bottom ($188, \dots, 15$)
    and its corresponding trace $\gettrace{e} = (\reads,\writeSet)$
    (right). The bullets represent the action identifiers of~$\reads$
    (written under the bullet), and the corresponding labels are
    represented right above. The arrows represent the program order
    $<$ of $\reads$. The set of writes is $\writeSet=\set{1,2,3}$
    (from actions $152$, $37$, and $188$
    respectively).}  \label{fig:trace}
\end{figure}

Then, we compare implementations by looking at the set of traces they
produce. The fewer traces an implementation produces, the stronger it
is, and the closer it is to strong consistency.

\begin{definition}
  \label{def:stronger}
  The notation $\gettrace{}$ is extended to sets of executions
  point-wise. An implementation $\impl_1$ is \emph{stronger} than
  $\impl_2$, denoted $\impl_1 \stronger \impl_2$ iff
  \begin{equation*}
    \gettrace{\semantics{\impl_1}} \subseteq
    \gettrace{\semantics{\impl_2}}
  \end{equation*}

  The implementations $\impl_1$ and $\impl_2$ are said to be
  \emph{equivalent}, denoted $\impl_1 \equivalent \impl_2$, iff
  $\impl_1 \stronger \impl_2$ and $\impl_2 \stronger
  \impl_1$. Moreover, $\impl_1$ is \emph{strictly stronger} than
  $\impl_2$, denoted $\impl_1 \strictstronger \impl_2$, iff $\impl_1
  \stronger \impl_2$ and $\impl_1 \not\equivalent \impl_2$.
\end{definition}

Our goal is to find an implementation $\impl$ which is minimal in the
$\stronger$ ordering, i.e.,~for which there does not exist an
implementation $\impl'$ strictly stronger than~$\impl$.


\section{Definition of Monotonic Prefix Consistency (\MPC)}
\label{sec:mpc}


Often called consistent
prefix~\cite{replicated-data-consistency-explained-through-baseball,guerraoui2016trade},
the \MPC model requires that all sites of the replicated system agree
on the order of write operations (i.e., updates on the state). More
precisely, this means that given two read operations (possibly on two
different sites), one read has to return a list of writes which is a
prefix of the other. Moreover, read operations which execute on the
same site are monotonic. This means that subsequent reads at the same
site reflect a non-decreasing prefix of writes, i.e., the prefix must
either increase or remain unchanged. The trace given in
\fig{fig:trace} satisfies these constraints.

Note that the order on write operations on which the sites agree does
not necessarily satisfy causality among these operations nor
real-time. In other words, the order in which clients submit write
operations does not translate into any constraints on the order in
which these updates apply at all sites.  Moreover, \MPC does not
guarantee that a read operation will return \emph{all} of the
preceding writes, only a prefix of these writes. For instance, some
sites can be later than other sites in applying some updates.


\begin{definition}
  \label{def:prefix}
  Given two lists $\ell_1,\ell_2 \in \Nat^*$, we say that $\ell_1$ is
  a \emph{prefix} of $\ell_2$, denoted $\ell_1 \leqp \ell_2$, if there
  exists $\ell_3 \in \Nat^*$ such that $\ell_2 = \ell_1 \cdot \ell_3$.
  Moreover, $\ell_1$ is a \emph{strict prefix} of $\ell_2$, denoted 
  $\ell_1 \ltp \ell_2$, if $\ell_1 \leqp \ell_2$ and $\ell_1 \neq \ell_2$.
\end{definition}

By abuse of notation, we extend the prefix order to elements of
\answers, which are of the form $\readlist(\ell)$ where $\ell$ is a list
(see Def.~\ref{def:upd-ans}). Moreover, we also use the prefix notations
for other types of sequences, such as executions. We now formally define
\MPC.

\begin{definition}
  \label{def:mpc}
  \MPC is the set of traces $(\reads,\writeSet)$ where $\reads =
  (A,<,\labell)$ satisfying the following conditions:
  \begin{itemize}
  \item \textbf{Monotonicity:} A query $\aid\,'$ done after $\aid$ on
    the same site cannot return a smaller list. For all $\aid,\aid\,'
    \in A$, if $\aid < \aid\,'$, then $\labell(\aid) \leqp
    \labell(\aid\,')$.

  \item \textbf{Prefix:} Queries done on different sites are
    compatible, in the sense that one is a prefix of the other. For
    any all $\aid,\aid\,' \in A$, $\labell(\aid) \leqp
    \labell(\aid\,')$ or $\labell(\aid\,') \leqp \labell(\aid)$.

  \item \textbf{Consistency:} Queries only return elements that
    come from a write. For all $\aid \in A$, and for any element $d
    \in \Nat$ of $\labell(\aid)$, we have $d \in \writeSet$.
  \end{itemize}
\end{definition}


\section{Feasibility of MPC}
\label{sec:mpc-feasible}




\begin{figure}[t]

\scriptsize

\begin{lstlisting}
// Each site stores an element of Q, defined as a list of numbers
type Q = List[Nat]

abstract class Msg
// Forwarded messages go from Site i to Site 1, for all i > 1
case class Forwarded(d: Nat) extends Msg
// Apply messages originate from Site 1 and go to Site i, for i > 1
case class Apply(d: Nat) extends Msg

// The update handler for Site 1 appends element `upd' to q,
// and tells the other sites to do the same with Apply(upd)
def update_handler(q: Q, upd: Upd) = (append(q,upd), Apply(upd))

// The update handler for Site i > 1 sends a message Forwarded(upd)
// which is destined for Site 1, and does not modify the state
def update_handler(q: Q, upd: Upd) = (q, Forwarded(upd))

// Message handler for Site 1 (ignores Apply messages)
def msg_handler(msg: Msg) = msg match {
  case Forwarded(d) => (append(q,d), Apply(upd))
}

// Message handler for Site i > 1 (ignores Forwarded messages)
def msg_handler(msg: Msg) = msg match {
  case Apply(d) => (append(q,d), $\bot$)
}

// The query handler of any site returns the local state
def query_handler(q: Q) = q
\end{lstlisting}

\caption{An implementation of \MPC which is centralized at Site $1$.}
\label{fig:mpc}
\end{figure}

In this section, we provide a toy implementation (\fig{fig:mpc}) whose
traces are all in \MPC, to show that \MPC is indeed implementable. The
idea is to let Site~$1$ decide on the order of all update
operations. In general, the consensus mechanism for implementing \MPC
can be arbitrary, and symmetric with respect to sites, but we present
this one for its simplicity.

For ease of presentation, we assume here that update and message
handlers can be different depending on the site. This can be simulated
in our original definition by using the $\initStates$ function
(Def.~\ref{def:impl}, Section~\ref{sec:implementations}), which
defines a particular initial state for each site

Each site maintains a local state (in $Q$) which is the prefix of
updates as decided by Site~$1$. Upon receiving an update (line~16),
Site~$i$ with $i > 1$ forwards the update to Site~$1$. When receiving
an update (line~12) or when receiving a forwarded message (line~20),
Site~$1$ updates its local state, and broadcasts an \texttt{Apply}
messages for the other sites. Finally, when receiving
an \texttt{Apply} messages (line~25), Site $i$ with $i > 1$, updates
its local state.

We assume that the \texttt{Apply} messages sent by Site $1$ are
received in the same order in which they are sent, which can be
implemented by having Site $1$ add a local version number to each
broadcast message, and having sites with $i > 1$ cache messages until
all previous messages have been received. Similarly, we assume that
each message which is sent by a site is treat at most once by each of
the other sites. We omit these details in \fig{fig:mpc}. Finally, the
query handler of each site (line~29) simply returns the list
maintained in the local state.

We now prove that all the traces of the implementation described in
Figure~\ref{fig:mpc} satisfy \MPC. 

\begin{restatable}{proposition}{feasability}
  Let $\impl$ be the implementation of \fig{fig:mpc}. Then $\impl
  \stronger\MPC$.
\end{restatable}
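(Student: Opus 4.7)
The plan is to establish a single global invariant about the implementation and then derive the three clauses of \MPC (Monotonicity, Prefix, Consistency) from it. Write $L_1(e,k)$ for the local state of Site~$1$ after the first $k$ actions of an execution $e$. Because Site~$1$ only ever modifies its state by appending (in its update handler or upon receipt of a \texttt{Forwarded} message), the sequence $L_1(e,0),L_1(e,1),\ldots$ is monotone for \leqp. The goal invariant is: for every site $\pid$ and every prefix of $e$, the local state of $\pid$ is a prefix (for \leqp) of $L_1(e,k)$ for some $k$. Once this invariant is in place, the three \MPC clauses are short consequences, and the query handler (line~29) just returns the local state.

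First I would set up the invariant by induction on the length of the prefix of $e$. The base case is immediate since all sites start with the empty list, which is trivially a prefix of $L_1(e,0)=\emptyseq$. For the induction step, I would do a case split on the next action following Definition~\ref{def:follows}: write actions and read actions at any site $\pid\neq 1$ do not change $\pid$'s local state; broadcast actions never change any state; at Site~$1$ an update or a \texttt{Forwarded} message appends to $L_1$, so the new $L_1$ dominates the old one; at Site~$i>1$, only reception of an \texttt{Apply} message changes the state, and the FIFO discipline on \texttt{Apply} messages guaranteed by the paragraph after \fig{fig:mpc} ensures that the sequence of appends processed at Site~$i$ is a prefix of the sequence processed at Site~$1$. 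Hence the state of Site~$i$ after this reception equals $L_1(e,k)$ for the index $k$ at which Site~$1$ broadcast that particular \texttt{Apply}.

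From the invariant the three \MPC axioms follow directly. For Monotonicity: if two queries $\aid<\aid\,'$ happen on the same site, the local state between them only changes by appending (again by inspection of the handlers), so the second label extends the first. For Prefix: any two local states are prefixes of Site~$1$'s state at some respective times; since $L_1$ is totally ordered by \leqp, the two local states are comparable, and therefore the two labels are as well. For Consistency: each append to any local state is triggered by a value $d$ that originally entered the system through an $\append(d)$ update action (either directly at Site~$1$ or via a \texttt{Forwarded} message from another site), so every element of a returned list lies in $\writeSet$.

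The main obstacle is the case of \texttt{Apply} reception at a non-coordinator site: I need the FIFO-on-\texttt{Apply} assumption stated informally right after \fig{fig:mpc} to conclude that the list reconstructed from successive \texttt{Apply} messages at Site~$i$ is really a prefix of Site~$1$'s list, and not an arbitrary permutation. I would therefore spell this assumption out as a lemma (``for every $i>1$, the sequence of values carried by \texttt{Apply} messages processed at Site~$i$ is a prefix of the sequence of values carried by \texttt{Apply} messages broadcast by Site~$1$''), prove it from the FIFO/de-duplication mechanism sketched after the figure, and invoke it in the induction step. Everything else is routine bookkeeping over Definitions~\ref{def:follows}--\ref{def:set-of-exec} and Definition~\ref{sec:deftrace}.
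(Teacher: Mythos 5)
Your proof is correct and follows essentially the same route as the paper's: an inductive invariant over finite prefixes of the execution stating that every site's local state is a prefix of Site~$1$'s (monotonically growing) list, from which Monotonicity, Prefix, and Consistency follow directly. If anything, you are more explicit than the paper's appendix about the need to isolate and prove the FIFO-on-\texttt{Apply} assumption as a separate lemma, which the paper only assumes informally in the text following \fig{fig:mpc}.
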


The formal proof is in Appendix~\ref{ann:MPC}. It relies on the
observation that the implementation maintains the following invariant:
\begin{itemize}
\item (Related to \textbf{Monotonicity}) The list maintained in the
  local state $Q$ of each site grows over time.

\item (Related to \textbf{Prefix}) At any moment, given two lists
  $\ell_1$ and $\ell_2$ of two sites, $\ell_1$ is a prefix of $\ell_2$
  or vice versa. Any list is always a prefix of (or equal to) the list
  of Site~$1$.

\item (Related to \textbf{Consistency}) The list of a site only
  contains values that come from some update.
\end{itemize}


\section{Nothing Stronger Than \MPC in a Distributed Setting}
\label{sec:mpcismin}

We now proceed to our main result, stating that there exists no
\emph{convergent} implementation stronger than \MPC. Convergent in our
setting means that every write action performed should \emph{eventually}
be taken into account by all sites. We formalize this notion in
Section~\ref{ssec:convergence}. This convergence assumption prevents
trivial implementations, for instances ones that do not communicate and
always return the empty list for all queries.

In Section~\ref{ssec:props}, we prove several lemmas that hold for all
implementations. We make use of these lemmas to prove our main theorem
in Section~\ref{ssec:proof}.


\subsection{Convergence Property}
\label{ssec:convergence}

Convergence is formalized using the notion of eventual consistency (see
e.g.~\cite{principles-of-eventual-consistency,DBLP:conf/popl/BouajjaniEH14}
for definitions similar to the one we use there). A trace is eventually
consistent if every write is \emph{eventually} propagated to all sites.
More precisely, for every action $\append(d)$, the number of queries
that do not contain $d$ in their list must be finite. Note that this
implies that all finite traces are eventually consistent. 

\begin{definition}
  \label{def:eventual}
  A trace $(\reads,\writeSet)$ with $\reads = (A,<,\labell)$ is 
  \emph{eventually consistent} if for every $d \in \writeSet$, 
  the set \( \set{\aid \in A\ |\ d \not \in \labell(\aid)} \)
  is finite.
  An implementation is \emph{convergent} if all of its traces are
  eventually consistent.
\end{definition}


\subsection{Properties of Implementations}
\label{ssec:props}

Lemmas~\ref{lemma:updavailability}, \ref{lemma:availability},
and~\ref{lemma:invisiblereads} describe basic closure properties of
the set of executions generated by implementations in our setting. The
semantics described in Section~\ref{sec:implementations} ensures that
new updates and queries can always be performed following an existing
execution.  Moreover, queries never modify the state, and therefore
removing a read action from an execution does not affect its validity
(Lemma~\ref{lemma:invisiblereads}).

\begin{lemma}[Update Availability]
  \label{lemma:updavailability}
  Let $\impl$ be an implementation. Let $\exec$ be a finite execution
  in $\semantics{\impl}$, and let $(\reads,\writeSet) =
  \gettrace{\exec}$. Let $d \in \Nat$. Then, there exists an execution
  $e' \in \semantics{\impl}$ such that $e$ is a prefix of $e'$ and
  $\gettrace{e'} = (\reads,\writeSet \cup \set{d})$.
\end{lemma}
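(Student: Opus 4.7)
The approach is direct: we extend the finite execution $e$ by appending an update action $\append(d)$ (and, if applicable, the broadcast it triggers) at some arbitrarily chosen site, and verify that the resulting $e'$ lies in $\semantics{\impl}$ and has the required trace. The key observation enabling this is that $\updhandler$ is a \emph{total} function, so an update is enabled at every local state; the lemma is essentially the formal statement that our model is update-available.

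In detail, the plan is as follows. First, handle the trivial case $d \in \writeSet$ by taking $e' = e$, so assume $d \notin \writeSet$. Pick any site $\pid \in \Pid$ (non-empty by Def.~\ref{def:impl}), and let $q$ denote the local state of $\pid$ after executing $\project{e}{\pid}$, which is well-defined and unique by Def.~\ref{def:follows} since $e \in \semantics{\impl}$. Compute $\updhandler(q,\append(d)) = (q',m)$ with $m \in \option{\Msg}$. Choose a fresh action identifier $\aid$ (not used in $e$) and set
\[
  e'' = e \cdot (\aid,\pid,\append(d)).
\]
If $m = \bot$, let $e' = e''$; otherwise, pick a fresh action identifier $\aid'$ and a fresh message identifier $\msgid$ and let $e' = e'' \cdot (\aid',\pid,\send{\msgid,m})$.

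Next, verify $e' \in \semantics{\impl}$. For \textbf{Projection}, the projections on sites $\pid'' \neq \pid$ are unchanged, hence still follow $\impl$; the projection on $\pid$ extends $\project{e}{\pid}$ by an update action (which is covered by item~\ref{update} of Def.~\ref{def:follows}) and possibly a broadcast action whose presence is justified by the second half of Def.~\ref{def:follows} using the freshly appended $\append(d)$ action and the value $m$ returned by $\updhandler$. For \textbf{Causality}, no new receive actions are introduced, so causality of receive actions is inherited from $e$. For \textbf{Fairness}, $e'$ is finite, so the condition is vacuously satisfied.

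Finally, checking $\gettrace{e'} = (\reads,\writeSet \cup \{d\})$ is immediate from Def.~\ref{sec:deftrace}: the appended actions contain no query actions, so the labelled poset component is unchanged and equals $\reads$; the write set gains exactly the element $d$ introduced by $(\aid,\pid,\append(d))$. There is no real obstacle here — the only care needed is the bookkeeping of fresh identifiers and the case split on whether $\updhandler$ emits a broadcast, so that the projection condition is discharged cleanly.
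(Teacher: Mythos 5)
Your proof is correct and follows essentially the same route as the paper's: append the update action (and the broadcast it triggers, if any) at an arbitrary site, extend the run using totality of $\updhandler$, and check the three conditions of Definition~\ref{def:set-of-exec} together with the trace equality. Your extra case split on $d \in \writeSet$ is a sensible refinement given the paper's convention that updates are unique, but it does not change the substance of the argument.
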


\begin{proof}
  Since $e \in \semantics{\impl}$, we know by
  Definitions~\ref{def:follows} and~\ref{def:set-of-exec} that
  $\project{e}{\pid}$ follows~$\impl$ and that there exists a run
  $q_0,\dots,q_n$ for $\project{e}{\pid}$. Let $(q_{n+1}, \msg) =
  \updhandler(q_n, \append(d))$. We distinguish two cases:
%

  (1) If $\msg = \bot$, let $e' = e \cdot (\aid,\pid,\append(d))$,
  where $\aid \in \Nat$ is a fresh action identifier that does not
  appear in~$e$, and $\pid$ is any process identifier in $\Pid$.


  (2) If $\msg \in \Msg$, let $e' = e \cdot (\aid_1,\pid,\append(d))
  \cdot (\aid_2,\send{\msgid,\msg})$, where $\aid_1$, $\aid_2$ are fresh
  action identifiers, and $\msgid$ is a fresh message identifier.

  In both cases, we construct a new run by adding the state $q_{n+1}$ at
  the end of the run $q_0,\dots,q_n$ (once in case~1, and twice in
  case~2). By Definition~\ref{def:follows}, this ensures that
  $\project{e'}{\pid}$ follows~$\impl$, and we then obtain $e' \in
  \semantics{\impl}$ by Definition~\ref{def:set-of-exec}. Moreover, we
  have $\gettrace{e'} = (\reads,\writeSet \cup \set{d})$, which
  concludes our proof. \foorp
\end{proof}

The next lemma shows that the implementation is \emph{available for
  queries}. This means that given a finite execution, we can perform a
query on any site and obtain an answer, as ensured by the definitions
given in Section~\ref{sec:implementations}. The proof is in
Appendix~\ref{ann:closure}.

\begin{restatable}[Query Availability]{lemma}{queryavailability}
  \label{lemma:availability} Let $\impl$ be an implementation. Let $e
  \in \semantics{\impl}$ be a finite execution and $\pid \in \Pid$.
  Then, there exist $\aid \in \Nat$ and $\ell \in \Nat^*$ such that the
  execution $e' = e \cdot (\aid,\pid,\readlist(\ell))$ belongs to
  $\semantics{\impl}$.
\end{restatable}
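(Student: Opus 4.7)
The plan is to mirror the structure of the proof of Lemma~\ref{lemma:updavailability} (Update Availability), exploiting the fact that $\qryhandler$ is a total function and that query actions are, by Definition~\ref{def:follows}, state-preserving and non-broadcasting.

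First, I would unpack the hypothesis: since $e \in \semantics{\impl}$ and $e$ is finite, Definition~\ref{def:set-of-exec} gives that $\project{e}{\pid}$ follows $\impl$, so by Definition~\ref{def:follows} there exists a (unique) run $q_0, \dots, q_n$ for $\project{e}{\pid}$ with $q_0 = \initStates(\pid)$. Then, because $\qryhandler : Q \to \answers$ is total (Definition~\ref{def:impl}), I set $\qryhandler(q_n) = \readlist(\ell)$ for some $\ell \in \Nat^*$, pick a fresh action identifier $\aid \in \Nat$ not appearing in $e$, and define $e' = e \cdot (\aid, \pid, \readlist(\ell))$.

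Next, I would verify that $e' \in \semantics{\impl}$ by checking the three conditions of Definition~\ref{def:set-of-exec}. For \textbf{Projection}, for any $\pid' \neq \pid$ we have $\project{e'}{\pid'} = \project{e}{\pid'}$, which already follows $\impl$. For $\pid$ itself, extending the run to $q_0, \dots, q_n, q_{n+1}$ with $q_{n+1} = q_n$ satisfies clause~\ref{query} of Definition~\ref{def:follows}, by the choice of $\ell$ and the fact that the query handler does not modify state; clauses \ref{update}, \ref{broadcast}, \ref{receive} are irrelevant since the new action is a query; and the broadcast-iff-triggered clauses (items 5 and 6) are also irrelevant, as the appended action is not a broadcast and the last action of $e$ is unchanged. \textbf{Causality} is immediate, since no new receive action has been introduced. \textbf{Fairness} holds vacuously because $e'$ is finite.

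I do not expect any real obstacle: the lemma is essentially the availability consequence of $\qryhandler$ being a total function, and all bookkeeping is routine. The only subtle point worth being explicit about is that appending a query does not violate the ``broadcast only if triggered'' conditions (items 5--6 of Definition~\ref{def:follows}), which is why a query may be inserted freely whereas update insertions (as in Lemma~\ref{lemma:updavailability}) must be paired with a possible broadcast action.
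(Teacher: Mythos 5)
Your proposal is correct and follows exactly the route the paper takes: the paper's proof simply says it is the same argument as Lemma~\ref{lemma:updavailability} with the query handler in place of the update handler, noting that only the no-message case arises since queries cannot broadcast. Your version just spells out the run extension $q_{n+1}=q_n$ and the verification of the three conditions of Definition~\ref{def:set-of-exec}, which the paper leaves implicit.
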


We then prove it is possible to remove any query action from an
execution.

\begin{restatable}[Invisible Reads]{lemma}{invisiblereads}
  \label{lemma:invisiblereads} Let $\impl$ be an implementation. Let
  $\exec \in \semantics{\impl}$ be an execution (finite or infinite) of
  the form $\exec_1 \cdot (\aid,\pid,\readlist(\ell)) \cdot
  \exec_2$, where $\aid \in \Nat$, $\pid \in \Pid$ and $\ell \in
  \Nat^*$. Then, $\exec_1 \cdot \exec_2 \in \semantics{\impl}$.
\end{restatable}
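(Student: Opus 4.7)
The plan is to verify the three conditions of Definition~\ref{def:set-of-exec} for the execution $\exec' = \exec_1 \cdot \exec_2$, exploiting the fact (clause~\ref{query} of Definition~\ref{def:follows}) that a query action leaves the local state unchanged. The causality and fairness conditions are essentially immediate, because the removed action $(\aid,\pid,\readlist(\ell))$ is neither a broadcast nor a receive: it neither appears as a required predecessor of any receive, nor is it required to appear as a receive by any broadcast. So removing it cannot break either condition, and the identifier $\aid$ disappearing causes no clash since identifiers only have to be unique.

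The only real work is the projection condition. For every $\pid' \neq \pid$, we have $\project{\exec'}{\pid'} = \project{\exec}{\pid'}$ by the last clause of the definition of projection, so $\project{\exec'}{\pid'}$ follows $\impl$ by hypothesis. For $\pid' = \pid$, I would use the run witnessing that $\project{\exec}{\pid}$ follows $\impl$: let $q_0, q_1, \dots$ be this run, and suppose the read action in question corresponds to index $k$ in the projected sequence, so $\act_k = (\aid,\pid,\readlist(\ell))$. By clause~\ref{query} of Definition~\ref{def:follows}, $q_k = q_{k-1}$. Then the sequence obtained from $\project{\exec}{\pid}$ by deleting $\act_k$ is witnessed to follow $\impl$ by the run $q_0, \dots, q_{k-1}, q_{k+1}, q_{k+2}, \dots$, which is valid precisely because $q_k = q_{k-1}$ so that the transition from $q_{k-1}$ to $q_{k+1}$ is identical to the transition from $q_k$ to $q_{k+1}$ in the original run. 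Each clause of Definition~\ref{def:follows} to be checked at a given index depends only on the state immediately before and after the action, so each is inherited from the original run.

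One small side-check to do is the condition that broadcast actions are performed \emph{if and only if} they are triggered by the preceding update or receive action, which in Definition~\ref{def:follows} refers to $\act_{i-1}$. If the action immediately following the deleted read in $\project{\exec}{\pid}$ happens to be a broadcast, we need to verify that the new predecessor (what was $\act_{k-1}$) still triggers it. This is automatic because $q_k = q_{k-1}$, so the conditions on $(q_{i-1}, \act_{i-1})$ producing the broadcast are the same whether we look at position $k-1$ or the old position $k$. I expect this bookkeeping about the trigger relation to be the main point worth spelling out; everything else reduces to the observation that deleting a stutter step in the state sequence preserves the validity of the surrounding transitions.
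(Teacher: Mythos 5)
Your proof is correct and takes essentially the same route as the paper's, which simply notes the claim is a direct consequence of Definition~\ref{def:follows} (query actions neither modify the local state nor broadcast messages); you have just spelled out explicitly the deletion of the stutter step $q_k = q_{k-1}$ from the run. One remark: your side-check about a broadcast immediately following the deleted read is actually vacuous, since by the ``only if'' direction of the broadcast condition such a broadcast would require its predecessor to be an update or receive action, so it can never directly follow a read in a valid sequence.
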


Lemma~\ref{lemma:limit} shows that, given an infinite sequence of
increasing finite executions $e_1\dots,e_n,\dots$ that satisfy a
fairness condition, the \emph{limit} execution (which is infinite) also
belongs to $\semantics{\impl}$. The fairness condition states that each
broadcast that appears in an execution $e_i$ must have corresponding
receive actions for each of the other sites $\pid \in \Pid$ in some
executions $e_j$.

\begin{definition}
  \label{definition:limit}
  Given an infinite sequence of finite sequences $e_1\dots,e_n,\dots$,
  such that for all $i \geq 1$, $e_i \ltp e_{i+1}$, the \emph{limit}
  $e^\infty$ of $e_1\dots,e_n,\dots$ is the (unique) infinite sequence
  such that for all $i$, $e_i \ltp e^\infty$.
\end{definition}

\begin{restatable}[Limit]{lemma}{limitexec}
  \label{lemma:limit} Let $\impl$ be an implementation. Let
  $e_1\dots,e_n,\dots$ be an infinite sequence of finite executions,
  such that for all $i \geq 1$, $e_i \in \semantics{\impl}$, $e_i
  \ltp e_{i+1}$, and such that for all $i \geq 1$, for all broadcast
  actions in $e_i$, and for all $\pid \in \Pid$, there exists $j \geq 1$
  such that $e_j$ contains a corresponding receive action.

  Then, the limit $e^\infty$ of $e_1\dots,e_n,\dots$ belongs to 
  $\semantics{\impl}$.
\end{restatable}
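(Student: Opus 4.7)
The plan is to verify directly the three conditions of Definition~\ref{def:set-of-exec} for $e^\infty$, using the fact that $e^\infty$ is the limit of the chain $e_1 \ltp e_2 \ltp \dots$ and that each $e_i \in \semantics{\impl}$. I first note that $e^\infty$ is a well-formed execution (Definition~\ref{def:exec}): since each $e_i$ satisfies the uniqueness of action identifiers and broadcast message identifiers, and every pair of actions in $e^\infty$ lies inside some $e_i$ (by a standard compactness argument on the chain), the uniqueness conditions propagate to the limit.

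For the \textbf{Projection} condition, I would fix a site $\pid \in \Pid$ and observe that $\project{e^\infty}{\pid}$ is precisely the limit of the chain $\project{e_1}{\pid} \leqp \project{e_2}{\pid} \leqp \dots$, since projection commutes with taking prefixes. Each $\project{e_i}{\pid}$ follows $\impl$, so by Definition~\ref{def:follows} there is a unique run $q_0^i q_1^i \dots q_{n_i}^i$ associated with it. Because runs are uniquely determined and $\project{e_i}{\pid}$ is a prefix of $\project{e_{i+1}}{\pid}$, the runs are consistent: $q_j^{i+1} = q_j^i$ for all $j \leq n_i$. I would then define $q_0 q_1 \dots$ to be their common extension and verify that each of the five clauses of Definition~\ref{def:follows} holds at every index $k$ of $\project{e^\infty}{\pid}$ — but each clause concerns only two consecutive states $q_{k-1}, q_k$ and the action $\act_k$, and these triples already appear in some $\project{e_i}{\pid}$ with $k \leq n_i$, where the clause is known to hold.

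The \textbf{Causality} condition is immediate: any receive action in $e^\infty$ appears at some finite position, hence inside some $e_i$, and the corresponding broadcast exists earlier in $e_i$ (since $e_i$ satisfies causality) and therefore also earlier in $e^\infty$. The \textbf{Fairness} condition is given essentially verbatim by the hypothesis: for any broadcast action in $e^\infty$, it lies in some $e_i$, and the hypothesis guarantees that for every $\pid \in \Pid$ there exists $j$ such that $e_j$ contains a corresponding receive action on $\pid$; this receive is then a receive in $e^\infty$ as well.

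The only step requiring care is the projection argument: one must be precise that the inductively defined runs agree on overlaps (which is where uniqueness of the run in Definition~\ref{def:follows} is used) and that checking Definition~\ref{def:follows} at index $k$ of the limit reduces to checking it in a single finite $e_i$ large enough to contain position $k$. The causality and fairness conditions, by contrast, are essentially transcriptions of the hypotheses, so I do not anticipate real difficulty there.
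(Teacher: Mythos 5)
Your proof is correct and follows essentially the same route as the paper's: verify the three conditions of Definition~\ref{def:set-of-exec} directly, constructing the limit run for each site's projection and observing that Causality and Fairness reduce to properties of the finite approximants $e_i$ (the latter being exactly the lemma's hypothesis). Your extra care about the runs agreeing on overlaps via uniqueness, and your use of $\leqp$ rather than $\ltp$ for the projected chains, are minor refinements of the same argument.
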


We finally prove in Lemma~\ref{lemma:convergence} that, given any
finite execution~$e$, it is possible to add a query action that
returns a list containing all the elements $W$ appearing in some write
action of~$e$. The proof relies on extending $e$ into an infinite
execution $e^\infty$ with an infinite number of queries. Our
convergence assumption then ensures that only finitely many of those
queries can ignore~$W$ (that is, return a list that does not contain
all elements of~$W$). This shows that there exists a query operation
(actually, infinite many) in $e^\infty$ that returns a list containing
all elements of~$W$. We can therefore take the finite prefix of
$e^\infty$ that ends with this query operation.

\begin{lemma}[Convergence]
  \label{lemma:convergence} Let $\impl$ be a convergent implementation.
  Let $e \in \semantics{\impl}$ be a finite execution and $\pid \in
  \Pid$. Let $W \subseteq \Nat$ be the set of elements appearing in an
  update action of~$e$, i.e.,
  $W = \set{d \in \Nat\ |\ \exists (\aid,\pid,\append(d)) \in \exec}$.

  Then, $e$~can be extended in an execution $e \cdot e' \cdot
  (\aid,\pid,\readlist(\ell)) \in \semantics{\impl}$ where $\ell \in
  \Nat^*$ contains every element of $W$, i.e., $W \subseteq \set{d \in
    \Nat \ |\ d \in \ell}$.  Moreover, we can define such an extension
  $e'$ that does not contain any query or update actions.
\end{lemma}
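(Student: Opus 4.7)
The plan is to extend $e$ into an infinite fair execution in $\semantics{\impl}$ that contains infinitely many queries on site $\pid$, exploit convergence to pick a query whose answer already covers $W$, and then use Lemma~\ref{lemma:invisiblereads} to eliminate every other query that was inserted along the way.

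First I would construct, by induction on $n$, an increasing sequence $e = e_0 \ltp e_1 \ltp e_2 \ltp \cdots$ of finite executions in $\semantics{\impl}$, alternating two kinds of extension steps. At each odd stage I invoke Lemma~\ref{lemma:availability} on $\pid$ to append a fresh query action $(\aid_n, \pid, \readlist(\ell_n))$, obtaining a longer execution still in $\semantics{\impl}$. At each even stage I pick, using a dovetailing schedule over the countable set of pairs (broadcast in the current execution, destination site in $\Pid$), an as yet unmatched pair, and append a receive action for that broadcast on that site, together with the possibly triggered broadcast prescribed by $\handler$. Validity of this extension follows directly from Definition~\ref{def:follows}: $\handler$ is total, the run of the destination site is uniquely extended by one or two states, and Causality holds since the corresponding broadcast already appears earlier in the execution. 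A proper dovetailing ensures that every broadcast ever occurring in some $e_n$ is eventually received on every site, which is precisely the fairness hypothesis required by Lemma~\ref{lemma:limit}; hence the limit $e^\infty$ belongs to $\semantics{\impl}$.

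Let $(\reads^\infty, W^\infty) = \gettrace{e^\infty}$. Since no update actions are added after $e$, we have $W^\infty = W$, and $\reads^\infty$ contains infinitely many query actions on $\pid$. Because $\impl$ is convergent, for every $d \in W$ the set of queries whose label omits $d$ is finite; since $W$ itself is finite (the finite execution $e$ can contain only finitely many update actions), all but finitely many of the queries on $\pid$ return a list containing every element of $W$. Pick any such query $(\aid, \pid, \readlist(\ell))$ with $W \subseteq \set{d \in \Nat \ |\ d \in \ell}$ and let $\bar e$ be the finite prefix of $e^\infty$ ending with this action. Finite prefixes trivially satisfy Projection and Causality, and are exempt from Fairness (Definition~\ref{def:set-of-exec}), so $\bar e \in \semantics{\impl}$.

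Finally, I repeatedly apply Lemma~\ref{lemma:invisiblereads} to strip out every query action that lies strictly between the initial segment $e$ and the chosen final query in $\bar e$. The resulting execution has the desired shape $e \cdot e' \cdot (\aid, \pid, \readlist(\ell))$ and still belongs to $\semantics{\impl}$; by construction $e'$ is composed only of broadcast and receive actions, since no updates were ever introduced after $e$ and all intermediate queries have just been removed. The main obstacle in this plan is the even-stage bookkeeping: because each delivered message may itself trigger a new broadcast, the set of broadcasts still to be delivered grows dynamically, so one must set up the dovetailing carefully to guarantee that every pair of a broadcast and a receiving site is eventually scheduled.
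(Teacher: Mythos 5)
Your proposal is correct and follows essentially the same route as the paper's proof: extend $e$ to an infinite fair execution with infinitely many queries on $\pid$ (delivering all pending broadcasts along the way), apply Lemma~\ref{lemma:limit} and convergence to find a query returning all of $W$, truncate there, and strip the intermediate queries with Lemma~\ref{lemma:invisiblereads}. The only difference is cosmetic: you deliver pending messages one at a time via dovetailing, whereas the paper delivers all broadcasts pending in $e_i$ in one batch when forming $e_{i+1}$; both satisfy the fairness hypothesis of the Limit lemma.
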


\begin{proof}
  We build an infinite sequence of finite executions
  $e_1,\dots,e_n,\dots$, where for every $i \geq 1$, $e_i \in
  \semantics{\impl}$. Moreover, we have $e_1 = e$ and for every $i
  \geq 1$, $e_i \leqp e_{i+1}$, and $e_{i+1}$ is obtained from $e_i$
  as follows.
  
  For every broadcast action $(\aid_1,\pid_1,\send{\msgid,\msg})$
  in~$e_i$, and for every $\pid_2 \in \Pid$, if there is no receive
  action $(\_,\pid_2,\receive{\msgid,\msg})$ in~$e_i$, then we add one
  when constructing~$e_{i+1}$. Moreover, if the message handler
  specifies that a message $\msg\,'$ should be sent when $\msg$ is
  received, we add a new broadcast action that sends $\msg\,'$,
  immediately following the receive action.  Finally, using
  Lemma~\ref{lemma:availability}, we add a query action (\readlist) on
  site~$\pid$.

  Then, we define $e^\infty$ to be the limit of $e_1,\dots,e_n,\dots$
  By Lemma~\ref{lemma:limit}, we have $e^\infty \in
  \semantics{\impl}$. Since $\impl$ is convergent, we know that
  $e^\infty$ is eventually consistent. This ensures that for every $d
  \in W$, out of the infinite number of queries that belong
  to~$e^\infty$, only finitely many do not contain~$d$.

  Therefore, there exists $i \geq 1$ such that $e_i$ ends with a query
  action that contains every element of~$W$. By construction, $e_i$~is
  of the form $e \cdot e'' \cdot (\aid,\pid,\readlist(\ell))$. Using
  Lemma~\ref{lemma:invisiblereads}, we remove every query action that
  appears in~$e''$, and obtain an execution of the form $e \cdot e'
  \cdot (\aid,\pid,\readlist(\ell))$ where $\ell \in \Nat^*$ contains
  every element of~$W$, and where $e'$ does not contain any query or
  update actions. \foorp
\end{proof}


\subsection{Nothing Is Stronger Than \MPC in a Distributed Setting}
\label{ssec:proof}

We now proceed with the proof that no convergent implementation is
strictly stronger than \MPC. We start with an implementation $\impl$
that is strictly stronger than \MPC and derive a contradiction.

More precisely, using the lemmas proved in Section~\ref{ssec:props}, we prove
that any trace of \MPC belongs to $\gettrace{\semantics{\impl}}$. First, we show
in Lemma~\ref{lem:finitetrace} that this holds for finite traces, by using an
induction on the number of write operations in the trace. For each write
operation $w$, we apply Lemma~\ref{lemma:convergence} in order to force the
sites to take into account $w$.

\begin{lemma}
  \label{lem:finitetrace}
  Let $\impl$ be a convergent implementation such that $\impl
  \strictstronger \MPC$, and let $\tr$ be a finite trace of \MPC. Then,
  there is a finite execution $\exec \in \semantics{\impl}$ such that
  $\gettrace{\exec} = \tr$.
\end{lemma}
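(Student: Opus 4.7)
The plan is to build a witness execution $\exec \in \semantics{\impl}$ directly from $\tr$, laying out updates and queries in phases that mirror the prefix chain exhibited by $\tr$, using Lemma~\ref{lemma:convergence} to synchronize sites between phases, Lemma~\ref{lemma:updavailability} to introduce writes, and Lemma~\ref{lemma:invisiblereads} to strip away scaffolding at the end. The hypothesis $\impl \strictstronger \MPC$ will enter only through $\impl \stronger \MPC$. Write $\tr = (\reads, \writeSet)$ with $\reads = (A, <, \labell)$; by the Prefix axiom of \MPC, the distinct query answers in $\tr$ form a chain under $\leqp$, so I name its maximum $\ell_{\max} = [d_1, \ldots, d_k]$ and enumerate $\writeSet \setminus \set{d_1, \ldots, d_k}$ in some arbitrary order as $d_{k+1}, \ldots, d_n$. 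One preliminary observation: a single-query execution on any site produces a trace in $\gettrace{\semantics{\impl}} \subseteq \MPC$ whose write set is empty, so by Consistency the query must return the empty list; hence $\qryhandler(\initStates(\pid))$ is the empty-list answer for every $\pid$.

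The construction then proceeds in $k+2$ phases. In \emph{Phase~0}, for every $\pid \in \Pid$ I append one \emph{auxiliary} read on $\pid$ (which returns the empty list by the preliminary observation) followed by the user reads of $\tr$ on $\pid$ with empty-list answer, listed in $<$-order. For $i = 1, \ldots, k$, \emph{Phase~$i$} performs $\append(d_i)$ on an arbitrary site, then applies Lemma~\ref{lemma:convergence} once per site $\pid \in \Pid$ to append broadcast/receive actions together with an auxiliary query on $\pid$, and finally appends the user reads of $\tr$ with answer $[d_1, \ldots, d_i]$ on each site in $<$-order. \emph{Phase~$k+1$} appends $\append(d_j)$ for $j = k+1, \ldots, n$ on arbitrary sites with no further propagation, so that these writes never enter any query. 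After the last phase I apply Lemma~\ref{lemma:invisiblereads} repeatedly to remove every auxiliary read; the surviving writes and user reads form $\tr$ up to identifier renaming.

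The main obstacle is verifying that the auxiliary read introduced on $\pid$ in Phase~$i$ genuinely returns $[d_1, \ldots, d_i]$, since the user reads that immediately follow it inspect the same local state (queries do not modify state) and must agree with the label dictated by $\tr$. I prove this by an inner induction on $i$. Lemma~\ref{lemma:convergence} ensures the answer contains every element of $\set{d_1, \ldots, d_i}$; applying the Consistency axiom to the partially built execution, whose trace lies in $\MPC$ because $\impl \stronger \MPC$, forces the answer to draw only from the current write set $\set{d_1, \ldots, d_i}$, so it must be a permutation of this set. The Monotonicity axiom applied to the two auxiliary reads on $\pid$ produced in Phases~$i-1$ and $i$ then requires the new answer to extend the previous one, which by the inner IH is $[d_1, \ldots, d_{i-1}]$; only the permutation $[d_1, \ldots, d_i]$ satisfies both constraints. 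This is precisely why the auxiliary reads must be inserted in every phase on every site: they act as Monotonicity witnesses that, phase by phase, pin down the write ordering that $\impl$ would otherwise be free to choose.
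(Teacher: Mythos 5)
Your construction is correct and it draws on exactly the same toolkit as the paper's proof of \lem{lem:finitetrace} (Lemmas~\ref{lemma:updavailability}, \ref{lemma:availability}, \ref{lemma:convergence}, \ref{lemma:invisiblereads}, plus the \MPC axioms applied to $\gettrace{e}$ via $\impl\stronger\MPC$), but the decomposition is genuinely different. The paper inducts on $|\writeSet|$: it either deletes a write that appears in no answer (case~1), or truncates the maximal answer $\ell=\ell'\cdot d$ by relabelling the $\ell$-queries as $\ell'$-queries, invokes the induction hypothesis, re-appends the write $d$, forces it through with \lem{lemma:convergence}, and concludes that the freshly added queries must answer exactly~$\ell$. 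You instead build the execution forward in one pass, realizing the prefix chain one element at a time; your auxiliary reads planted on every site in every phase are the Monotonicity anchors that in the paper's argument are supplied for free by the queries already present in the induction hypothesis's execution, and they are stripped at the end just as the paper strips the set~$R$. Batching the never-read writes into a final phase neatly replaces the paper's case~(1), since a finite trace is vacuously eventually consistent. Two small points to tighten. First, within Phase~$i$ the applications of \lem{lemma:convergence} for later sites may insert receive actions on an earlier site $\pid$ between $\pid$'s auxiliary read and $\pid$'s user reads, so the latter do not literally inspect the same local state; either interleave each site's user reads immediately after that site's auxiliary query, or rerun the Monotonicity-plus-Consistency argument on the user reads themselves. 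Second, the inference ``contains all of $\set{d_1,\dots,d_i}$, extends $[d_1,\dots,d_{i-1}]$, draws only from $\set{d_1,\dots,d_i}$, hence equals $[d_1,\dots,d_i]$'' tacitly rules out repeated elements in an answer; this is the same appeal the paper makes when it asserts that ``the only possible answer for all these queries is the entire list~$\ell$,'' so it is not a gap relative to the paper's own standard, but it deserves a sentence.
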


\begin{proof}
  Let $\tr = (\reads,\writeSet)$. We proceed by induction on the size
  of~$\writeSet$, denoted~$n$.
 
  \textbf{Case $\bm{n = 0}$.} In that case, the set $\writeSet$ is empty.
  First, by definition of $\semantics{\impl}$, we have $\emptyseq \in
  \semantics{\impl}$ where $\emptyseq$ is the empty execution. Then, for
  each read operation in $\tr$, and using
  Lemma~\ref{lemma:availability}, we add a read operation to the 
  execution. We obtain an execution $e \in \semantics{\impl}$.
  
  We then have to prove that $\gettrace{e} = \tr$, meaning that all the
  read operations of $e$ return the empty list, as in $\tr$. 
  By our assumption that $\impl \strictstronger \MPC$, we know that
  $\gettrace{e} \in \MPC$. 
  By definition of $\MPC$, and since $e$ contains no write operation,
  the Consistency property of $\MPC$ ensures that all the read actions
  of $e$ return the empty list. Therefore, we have $\gettrace{e} = \tr$,
  which concludes our proof.
  
  \textbf{Case $\bm{n > 0}$.} We consider two subcases. (1) There exists
  a write $w \in \writeSet$ whose value does not appear in
  $\reads$. We consider the trace $\tr' = (\reads,\writeSet \setminus
  \set{w})$. By definition of $\MPC$, $\tr'$ belongs to $\MPC$, and we
  deduce by induction hypothesis that there exists an execution $e' \in
  \semantics{\impl}$ such that $\gettrace{e'} = \tr'$. By
  Lemma~\ref{lemma:updavailability}, we extend $e'$ in an execution $e
  \in \semantics{\impl}$ so that $\gettrace{e} = \tr$, which is what we
  wanted to prove.

  (2) All the writes of $\writeSet$ appear in the reads of $\reads$. By
  the Consistency and Prefix properties of $\MPC$, there exists a
  non-empty sequence $\ell \in \Nat^+$ of elements from~$\writeSet$,
  such that all read actions return a prefix of~$\ell$, and there 
  exist read actions that return the whole list~$\ell$.
  
  Let $\ell = \ell' \cdot d$, where $d \in \Nat$ is the last element
  of $\ell$. Let $\tr'$ be the trace $(\reads', \writeSet \setminus
  \set{d})$, such that $\reads'$ is the trace $\reads$ where every query
  action labelled by $\ell$ is replaced by a query action labelled
  by~$\ell'$, and implicitly, every query action labelled by any prefix
  of $\ell'$ is unchanged. Let $R$ the set of the newly added query
  actions, and let $P \subseteq \Pid$ be the set of site identifiers
  that appear in an action of~$R$.

  By definition of $\MPC$, we have $\tr' \in \MPC$. By induction 
  hypothesis, we deduce that there exists a finite execution $\exec' \in
  \semantics{\impl}$ such that $\gettrace{\exec'} = \tr'$. 
  
  Then, by Lemma~\ref{lemma:updavailability}, we add at the end
  of~$\exec'$ an update action (on some site $\pid \in \Pid$ and with
  some fresh $\aid \in \Nat$), which is of the form
  $(\aid,\pid,\append(d))$, so we get an execution $\exec'' \in
  \semantics{\impl}$ such that $\gettrace{\exec''} = (\reads',
  \writeSet \setminus \set{d} \cup \set{d}) = (\reads', \writeSet)$.

  Using Lemma~\ref{lemma:convergence}, we extend $\exec''$ in an
  execution $e'''$ by adding queries to the sites in~$P$, as many as
  were replaced by queries in $R$. Since $\impl \strictstronger \MPC$,
  and since by Lemma~\ref{lemma:convergence}, the answers to these
  queries must contain all the elements of $\ell$, we conclude that the
  only possible answer for all these queries is the entire list~$\ell$.
  
  Finally, we use Lemma~\ref{lemma:invisiblereads} to remove the queries
  $R$ from $e'''$, and we obtain an execution in
  $\semantics{\impl}$ whose trace is~$\tr$. \foorp
\end{proof}

We then extend Lemma~\ref{lem:finitetrace} to infinite executions.

\begin{theorem}
  \label{theorem:main}
Let $\impl$ be a convergent implementation.
Then,  $\impl$ is not strictly stronger than \MPC:
\(
  \impl \not \strictstronger \MPC
\).
\end{theorem}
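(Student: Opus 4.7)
I argue by contradiction: suppose $\impl \strictstronger \MPC$, so there exists some $\tau \in \MPC \setminus \gettrace{\semantics{\impl}}$. Under exactly this hypothesis, Lemma~\ref{lem:finitetrace} already shows that every finite MPC trace lies in $\gettrace{\semantics{\impl}}$, so $\tau$ must be infinite; moreover, since $\impl$ is convergent, any $\tau$ it could realize is eventually consistent, so it is enough to handle an infinite, eventually consistent MPC trace $\tau = (R,W)$. The whole task therefore reduces to lifting Lemma~\ref{lem:finitetrace} from finite to infinite MPC traces.

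The plan is to realize $\tau$ as the trace of an execution $e^\infty \in \semantics{\impl}$ obtained, via Lemma~\ref{lemma:limit}, as the limit of an increasing chain $e_1 \leqp e_2 \leqp \cdots$ of finite executions in $\semantics{\impl}$, each with $\gettrace{e_n}$ equal to a finite MPC sub-trace $\tau_n$ of $\tau$ satisfying $\bigcup_n \tau_n = \tau$. By the Prefix axiom of MPC, the labels of the reads in $R$ form a chain under prefix order whose supremum is a canonical sequence $L = d_1 d_2 \cdots$ on the writes of $W$ that appear in some read; by eventual consistency every $d \in W$ occurs in $L$ (writes of $W$ not in $L$, if any, can be scheduled arbitrarily with Lemma~\ref{lemma:updavailability} without affecting any read label). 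I would enumerate the events of $\tau$ in an order that processes writes along $L$, adds each read of $R$ with label $d_1 \cdots d_k$ only after $d_k$ has been submitted, and periodically delivers every pending broadcast so that the fairness hypothesis of Lemma~\ref{lemma:limit} is satisfied.

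To extend $e_n$ into $e_{n+1}$, a write $d_k$ is appended using Lemma~\ref{lemma:updavailability}; a read on site $\pid$ meant to return $d_1 \cdots d_k$ is added by first invoking Lemma~\ref{lemma:convergence} to force $\pid$ to see every previously submitted write, and then reading. The main obstacle is that Lemma~\ref{lemma:convergence} only guarantees that the returned list contains every previously submitted write; it does not by itself constrain their order. This is settled exactly as at the end of the induction step of Lemma~\ref{lem:finitetrace}: because $\impl \stronger \MPC$, the trace under construction belongs to $\MPC$, so the Consistency and Prefix axioms force the new read to be compatible with every earlier forced read. Since writes were submitted in $L$-order and the prefix $d_1 \cdots d_{k-1}$ had already been pinned down by earlier forced reads, the only legal answer containing all $k$ submitted writes is $d_1 \cdots d_k$.

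Once this inductive step is in place, Lemma~\ref{lemma:limit} yields $e^\infty \in \semantics{\impl}$ with $\gettrace{e^\infty} = \tau$, contradicting $\tau \notin \gettrace{\semantics{\impl}}$ and thus the supposition $\impl \strictstronger \MPC$.
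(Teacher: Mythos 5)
Your overall route is the same as the paper's: dispose of finite traces with Lemma~\ref{lem:finitetrace}, and for an infinite trace build an increasing chain of finite executions realizing finite sub-traces (interleaving the writes among the reads in prefix order, delivering all pending broadcasts along the way so that the hypothesis of Lemma~\ref{lemma:limit} is met), then pass to the limit with Lemma~\ref{lemma:limit}. Your mechanism for pinning down the value of each read --- Lemma~\ref{lemma:convergence} forces the answer to contain every submitted write, and membership of the implementation's traces in \MPC together with Monotonicity/Prefix against previously forced reads forces the exact order --- is also exactly the paper's.

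The step that does not hold up as written is the reduction ``it is enough to handle an infinite, eventually consistent \MPC{} trace.'' The implication runs the wrong way: convergence of $\impl$ tells you that every trace $\impl$ \emph{does} realize is eventually consistent, so a non--eventually-consistent trace of \MPC{} (e.g.\ infinitely many reads all returning the empty list while $\writeSet=\set{1}$, which satisfies Definition~\ref{def:mpc} but not Definition~\ref{def:eventual}) automatically lies in $\MPC \setminus \gettrace{\semantics{\impl}}$. Such a trace is precisely a witness you cannot rule out, not one you may skip; and your construction genuinely needs eventual consistency (both to schedule each write after only finitely many of the reads that miss it, and to make the limit trace coincide with the target), so the gap cannot be closed by the justification you give. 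In fairness, the paper's own proof is exposed at the same point --- its sets $W_i$ are read off the query labels, so a write appearing in no read is never forced into the construction, yet $\gettrace{e^\infty}=\tr$ is asserted for every $\tr\in\MPC$ --- so you have surfaced a real weakness of the statement and definitions rather than something a sharper construction would avoid; repairing it requires restricting the traces under comparison to eventually consistent ones (or building that requirement into \MPC), not a different inductive argument.
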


\begin{proof}
Assume that $\impl$ is strictly stronger than $\MPC$ i.e.~$\impl
\strictstronger \MPC$. Our goal is to prove that $\MPC \stronger
\impl$ therefore leading to a contradiction. In terms of traces, we
want to prove that $\MPC \subseteq \gettrace{\semantics{\impl}}$.

Let $\tr = (\reads,\writeSet) \in \MPC$. We need to show that $\tr \in
\gettrace{\semantics{\impl}}$. 

 
\textbf{Case where $\bm\tr$ is finite.}
Proven in Lemma~\ref{lem:finitetrace}.

\textbf{Case where $\bm\tr$ is infinite.} Let $\reads =
(A,<,\labell)$. We first order all the query actions in $A$ as a
sequence $\aid_1,\dots,\aid_n,\dots$ such that for every $i \geq 1$,
$\labell(\aid_i) \leqp \labell(\aid_{i+1})$, and for every $i,j \geq
1$, $\aid_i < \aid_j$ (in the program order of $\reads$) implies $i <
j$.  Defining such a sequence is possible thanks to the Monotonicity
property of \MPC.

For each $i \geq 1$, we define a \emph{finite} trace $\tr_i$ that
contains all query actions $\aid_j$ with $j \leq i$, and the subset
$W_i$ of $\writeSet$ that contains all elements appearing in these query
actions, i.e.~$W_i = \set{d \in W \ |\ d \in \labell(\aid_i)}$. Our goal
is to construct an execution $e_i \in \semantics{\impl}$ such that
$\gettrace{e_i} = \tr_i$, and such that for all $i \geq 1$, $e_i \ltp
e_{i+1}$. We then define $e^\infty$ as the limit of
$e_1,\dots,e_n,\dots$ By Lemma~\ref{lemma:limit}, we have $e^\infty \in
\semantics{\impl}$. Since $\gettrace{e^\infty} = \tr$, we deduce that
$\tr \in \gettrace{\semantics{\impl}}$, which concludes the proof.

We now explain how to construct $e_i$, for every $i \geq 1$, by
induction on $i$. Let $e_0$ be the empty execution and $\tr_0 =
\gettrace{e_0}$. For $i \geq 0$, we define $e_{i+1}$ by starting from
$e_{i}$, and extending it as follows. By induction, we know that
$\gettrace{e_i} = \tr_i$, and want to extend it into an execution
$e_{i+1}$ such that $\gettrace{e_{i+1}} = \tr_{i+1}$.

The next step of the proof is similar to the proof of
Lemma~\ref{lemma:convergence}. For every broadcast action
$(\aid_1,\pid_1,\send{\msgid,\msg})$ in $e_i$, and for every $\pid_2
\in \Pid$, if there is no receive action
$(\_,\pid_2,\receive{\msgid,\msg})$ in $e_i$, then we add one when
constructing $e_{i+1}$. Moreover, if the message handler specifies
that a message $\msg\,'$ should be sent when $\msg$ is received, we
add a new broadcast action that sends $\msg\,'$, immediately following
the receive action.

Then, similarly to the construction in Lemma~\ref{lem:finitetrace}, we
add update and query actions (using Lemmas~\ref{lemma:updavailability},
\ref{lemma:availability}, and \ref{lemma:convergence}) in order to
obtain an execution $e_{i+1}$ such that $\gettrace{e_{i+1}} = \tr_{i+1}$.
\foorp
\end{proof}


\section{Comparison with Other Consistency Criteria}
\label{sec:related}

\paragraph{Relation between \MPC{} and other consistency criteria.}

Consistency criteria are usually defined in terms of \emph{full traces}
that contain both the read and write operations in the program order
(see e.g.,~\cite{principles-of-eventual-consistency}). The definition of
trace we used in this paper (Def.~\ref{sec:deftrace},
Section~\ref{sec:problem}) puts the writes in an unordered set,
unrelated to the read operations. This choice is justified in
large-scale, open, implementations, such as blockchain protocols.
Indeed, in these systems, any participant can perform a write operation
(e.g., a blockchain transaction), and the origin of the write has no
relevance for the protocol.


When considering full traces, \MPC as a consistency criterion is
strictly weaker than strong consistency. Indeed, \MPC allows a trace
where a read preceded by a write on the same site ignores that write.

As explained in the introduction, \MPC is not comparable to
causal consistency. \MPC allows full traces that causal consistency
forbids and vice versa. Therefore, our result stating that 
nothing stronger than \MPC that can be implemented in a distributed
setting does not contradict earlier results of~\cite{mahajan11cacTR}
and~\cite{attiya2017limitations}, which show that nothing stronger than
variants of causal consistency can be implemented.


\paragraph{Relation with other criteria when using our notion of a trace.}

When using our notion of a trace, \MPC is strictly stronger than causal
consistency. First, \MPC is stronger than causal consistency because
every trace of \MPC can be produced by a causally consistent system. The
main reason is that our notion of a trace does not capture any causality
relation. Moreover, there are some traces that causal consistency
produces and that do not belong to \MPC, e.g.~a trace where Site~1 has
a $\readlist[1,2]$ operation, Site~2 has a $\readlist[2,1]$, and where
$\append(1)$ and $\append(2)$ are not causally related as they happen 
at the \emph{same time} (this explains
that \MPC is \emph{strictly} stronger than causal consistency).

Moreover, it is interesting to note that, for our notion of a trace, the
traces allowed by \MPC are exactly the traces allowed by strong
consistency. This entails that, if the replicated data type is
used by clients that only have the observability defined by our traces, then there is no need
to implement strong consistency. In short, \MPC and strong consistency
are indistinguishable to these clients.

\section{Conclusion}
\label{sec:conclusion}

We have investigated the question of what is the strongest consistency
criterion that can be implemented when replicating a data structure,
in distributed systems under availability and partition-tolerance
requirements. Earlier work had established the impossibility of
implementing strong consistency in such a system model, but left open
the question of the strongest criteria that {\em can} be
implemented. In this paper we have focused on the \emph{Monotonic
  Prefix Consistency} (\MPC) criterion. We proposed an implementation
of \MPC and showed that no criterion stronger than \MPC can be
implemented.


It is worth noting that blockchain protocols, such as the Bitcoin
protocol~\cite{bbbitcoin}, implement \MPC with high probability: the
traces that the protocol produces are traces that belong to \MPC with
high probability. This was shown in
\cite{blockchainasync,btcbackbone}. More precisely, the authors proved that
the blockchains of two honest participants are compatible, in the
sense that one should be a prefix of the other with high probability,
when ignoring the last blocks\footnote{In Bitcoin-like protocols, the
  most recent blocks are ignored as they are considered unsafe to use
  until newer blocks are appended after them.}. This property is
called \emph{consistency} in \cite{blockchainasync}, and it
corresponds to the Prefix property we give in Section~\ref{sec:mpc}.
Moreover, it was shown~\cite{blockchainasync,btcbackbone} that the
blockchain of an honest participant only grows over time. This
property is called \emph{future-self consistency} in
\cite{blockchainasync}, and it corresponds to the Monotonicity
property we give in Section~\ref{sec:mpc}.

In future work we plan to investigate how the strongest achievable
consistency criterion depends on observability -- that is, the
information encoded in a trace -- and study conditions for the
(non)existence of a strongest consistency criterion.
We are also interested in extending our result to other system models.
Specifically, answering the question of what is the strongest
consistency criterion that can be implemented in systems where the
origin of updates do matter for the protocol. Also, the question whether
MPC is the strongest implementable consistency criterium in a {\em
probabilistic} setting, remains open.



\bibliographystyle{splncs}
\bibliography{references}

\newpage
\appendix
\section{Proof of Feasability of MPC}
\label{ann:MPC}

\feasability*

\begin{proof}
  Let $e \in \semantics{\impl}$, we establish an inductive invariant
  that holds for every finite prefix $e'$ of $e$. Let $A$ be the set of
  action identifiers of $e'$. Let $\ell \in \Nat^*$ be the sequence of
  values that appear in a broadcast message $\texttt{Apply}$ from
  Site~$1$, in the order they appear in $e'$.

  Let $\Pid$ be the set of process identifiers. For each site $\pid \in
  \Pid$, consider the unique run $r$ for the projection
  $\project{e'}{\pid}$, and let $\ell_\pid \in \Nat^*$ be the sequence
  maintained in the local state of Site~$\pid$ at the end of the run $r$. 

  Let $\tr' = \gettrace{e'}$ be the trace of $e'$, with $\tr' =
  (\reads,\writeSet)$.

  Then, we have the following properties.
  \begin{enumerate}
  \item For every $\texttt{Apply}(d)$ message with $d \in \Nat$ that
      appears in $e'$ (from Site~$1$), we have $d \in W$.

  \item For every $\texttt{Forwarded}(d)$ message with $d \in \Nat$ that
      appears in $e'$ (from Site~$i$ with $i > 1$), we have $d \in W$.

  \item The elements of $\ell$ are in $W$.

  \item For every $\pid \in \Pid$, $l_\pid \leqp \ell$.

  \item For every query $(\_,\pid,\readlist\ \ell')$ in $e$ with $\pid \in
      \Pid$, we have $\ell' \leqp \ell_\pid$.

  \item \textbf{Consistency:} For all $\aid \in A$, and for any element
      $d \in \Nat$ of $\labell(\aid)$, we have $d \in \writeSet$.

  \item \textbf{Prefix:} For any all $\aid,\aid\,' \in A$, $\labell(\aid)
      \leqp
      \labell(\aid\,')$ or $\labell(\aid\,') \leqp \labell(\aid)$.

  \item \textbf{Monotonicity:} For all $\aid,\aid\,' \in A$, if $\aid <
      \aid\,'$, then $\labell(\aid) \leqp
      \labell(\aid\,')$.
  \end{enumerate}
  
  We can see that this invariant holds for the empty execution, and that
  any action that the implementation can take maintains it. 
\end{proof}

\section{Closure Properties of Implementations}
\label{ann:closure}

\queryavailability*

\begin{proof}
  Similar to the proof of Lemma~\ref{lemma:updavailability}, but
  using the query handler, instead of the update handler. This proof is
  also simpler, as there is no need to consider messages, since the
  query handler cannot broadcast any message. Therefore, in this proof,
  only case~1 needs to be considered. \foorp
\end{proof}

\invisiblereads*

\begin{proof}
  This is a direct consequence of Definition~\ref{def:follows}, which
  specifies that query actions do not modify the local state of sites,
  and do not broadcast messages. \foorp
\end{proof}

\limitexec*

\begin{proof}
  According to Definition~\ref{def:set-of-exec}, we have three points
  to prove.
  
  (1) (Projection) First, we want to show that, for all $\pid \in \Pid$,
  the projection $\project{e^\infty}{\pid}$ follows~$\impl$. For all $i
  \geq 1$, we know that $e_i \in \semantics{\impl}$, and deduce that
  $\project{e_i}{\pid}$ follows~$\impl$. Let $r_i$ be the run of
  $\project{e_i}{\pid}$. Note that for all $i \geq 1$, we have $r_i \ltp
  r_{i+1}$. Let $r^\infty_\pid$ be the limit of the runs
  $r_1,\dots,r_n,\dots$ By construction, $r^\infty_\pid$ is a run of
  $\project{e^\infty}{\pid}$, which shows that
  $\project{e^\infty}{\pid}$ follows~$\impl$.

  (2) (Causality) We need to prove that every receive action $\act$ in
  $e^\infty$ has a corresponding broadcast action $\act'$ that precedes
  it in~$e^\infty$. Let $e_i$ be a prefix of $e^\infty$ that
  contains~$\act$. Since $e_i \in \semantics{\impl}$, we know that there
  exists a broadcast action $\act'$ corresponding to~$\act$, and that
  precedes $\act$ in~$e_i$. Finally, since $e_i \ltp e^\infty$, $\act'$
  precedes $\act$ in~$e^\infty$.

  (3) (Fairness) We want to prove that for every broadcast action
  $\act$ of $e^\infty$ and for every site $\pid \in \Pid$, there
  exists a corresponding receive action~$\act'$. Let $e_i$ be a prefix
  of $e^\infty$ that contains~$\act$. By assumption of the current
  lemma, there exists $j \geq 1$ such that $e_j$ contains a receive
  action $\act'$ corresponding to~$\act$. Moreover, since $e_j \ltp
  e^\infty$, $\act'$ belongs to~$e^\infty$, which concludes our proof.
  \foorp
\end{proof}

\end{document}